\theoremstyle{theorem}
\newtheorem{theorem}{Theorem}
\theoremstyle{definition}
\newtheorem{example}{Example}
\newcommand{\derive}[2]{\Rightarrow_{#1}^{#2}}
\newcommand{\lrceil}[1]{\lceil #1 \rceil}
\newcommand{\Fib}{\mathit{Fib}}
\begin{document}
\title{Grammar compression with\\ probabilistic context-free grammar}

\author[1]{Hiroaki Naganuma}
\author[1]{Diptarama Hendrian}
\author[1]{Ryo Yoshinaka}
\author[1]{Ayumi Shinohara}
\author[2]{Naoki Kobayashi}

\affil[1]{Tohoku University, Japan}
\affil[2]{The University of Tokyo, Japan}

\date{}
\maketitle 
\begin{abstract}
We propose a new approach for universal lossless text compression, based on grammar compression.
In the literature, 
a target string $T$ has been compressed as a context-free grammar $G$ in Chomsky normal form satisfying $L(G) = \{T\}$.
Such a grammar is often called a \emph{straight-line program} (SLP).
In this paper, we consider a probabilistic grammar $G$ that generates $T$, but not necessarily as a unique element of $L(G)$.
In order to recover the original text $T$ unambiguously, we keep both the grammar $G$ and the derivation tree of $T$ from the start symbol in $G$, in compressed form.
We show some simple evidence that our proposal is indeed more efficient than SLPs for certain texts, both from theoretical and practical points of view.

\end{abstract}

\section{Introduction}
There are two main approaches of \emph{grammar-based compressions} in the literature.
One is a domain-specific approach, where
a grammar $G$ is fixed and known to both an encoder and a decoder, and 
any target text $T$ is assumed to belong to the language $L(G)$. 
When compressing the text $T$, the encoder compresses the derivation tree of $T$ from the start symbol of the grammar $G$.
Katajainen et al.~\cite{Katajainen1986software} and Cameron~\cite{Cameron1988source} initiated 
this approach for compressing program source codes, and
Eck et al.~\cite{Eck1998dcc} and Evans~\cite{Evans1998dcc} followed them.
Lake~\cite{Lake2000dcc} and Tarhio~\cite{Tarhio2001CPT} strengthened their work by combining with the \emph{partial pattern matching (PPM) method}~\cite{Cleary1984ppm}. Depending on the target domains, various methods are proposed, especially for 
XML files~\cite{Cheney2001dcc,liefke2000xmill}.

The other approach is universal lossless compression, which assumes no specific grammar by which target texts should be derived.
Rather, we must find a grammar $G$ that \emph{exactly} captures the structure of the input text $T$ in the sense that $L(G) = \{T\}$.
Since $L(G)$ is a singleton, the original text $T$ can successfully be recovered from $G$, where we do not need a derivation tree any more.
If the description size of the grammar $G$ is much smaller than the size of the original text $T$, 
we can 
not only save the space, but also obtain some hierarchical structures hidden in $T$.
The core idea of this approach can be traced back to the famous compression algorithms~\cite{ZivLempel1978lz,Storer1982jacm,Welch1984computer} in the 70's and 80's.
{\sc Sequitur}~\cite{NevillManning1997CJ} and {\sc Re-Pair}~\cite{Larsson1999RePair} are pioneering work that belong to this approach, and
various improvements have been proposed~\cite{navarro2008re,Yoshida2013dcc,Sekine2014dcc,Masaki2016dcc,Bille2017dcc,Ganczorz2017dcc,furuya2019mr}.
In the literature, a Chomsky-normal form grammar $G$ that generates a singleton $\{ T \}$ is often called a \emph{straight-line program} (SLP)~\cite{Karpinski1995cpm,Maruyama2014fully,Saad2018dcc}, and some extensions to tree grammars~\cite{Lohrey2011dcc,Lohrey2015dlt} are also known.

In this paper, we propose a new compression scheme which generalizes those two approaches.
For each target text $T$, we consider a different grammar $G$ such that $T \in L(G)$, but $L(G)$ is not necessarily a singleton.
To ensure that \(T\) can be uniquely reconstructed by a decoder,
the encoder needs to find a grammar $G$ deriving $T$ and encode both the grammar $G$ and the derivation tree of $T$ in $G$.
Although the proposed scheme itself should widely be applicable for various classes of grammars, 
we focus on the \emph{probabilistic context-free grammars in Chomsky-normal form (PCFGs)}
in the present paper, in order to compare them with SLPs,
both from theoretical and practical points of view.
We provide some simple facts and observations that our scheme is promising to have more compact expressions of certain texts.
In Section~\ref{sec:gc}, we prove that a string ${\tt a}^n$ can be expressed in $O(\log{n})$ bits in PCFGs, while it requires $\Omega(\log{n}\log{\log{n}})$ bits in SLPs.
Experimental results on Fibonacci strings with noises are presented in Section~\ref{sec:experiments}.
Fibonacci strings are among the most efficiently compressible strings by SLPs.
We will see that while noises disturb much the regular structure of the SLPs, augmenting them with auxiliary rules for noises with low probability accomplishes efficient compression of the noisy Fibonacci strings.

\section{Preliminaries}
\label{sec:preliminaries}
A \emph{context-free grammar (CFG)} is denoted as a quadruple $G=(\Sigma,V,R,S)$, where 
$\Sigma$ is the terminal symbol set, 
$V$ is the nonterminal symbol set, 
$R \subseteq V \times (\Sigma \cup V)^*$ is the production rule set,
and $S \in V$ is the start symbol.
The \emph{head} and the \emph{body} of a rule $v \to \alpha \in R$ are $v$ and $\alpha$, respectively.
The set of rules with head $v \in V$ is denoted as $R_v$.
Each rule $r \in R$ defines a binary relation $\derive{r}{}$: for $r = v \to \alpha$, we write $\beta v \gamma \derive{r}{} \beta \alpha \gamma$ for any $\beta,\gamma \in (\Sigma \cup V)^*$.
We also write $\alpha \derive{G}{} \beta$ if there is $r \in R$ for which $\alpha \derive{r}{} \beta$.
The reflexive and transitive closure of $\derive{G}{}$ is denoted by $\derive{G}{*}$.
The language of $G$, denoted by $L(G)$, is the set of strings over $\Sigma$ that are obtained by rewriting strings using production rules starting from $S$, i.e., $L(G) = \{\, T \in \Sigma^* \mid S \derive{G}{*} T\,\}$.
A rewriting $\beta v \gamma \derive{G}{} \beta \alpha \gamma$ is called \emph{left-most} if $\beta \in \Sigma^*$.
A \emph{left-most derivation sequence} of a string $T \in \Sigma^*$ by $G$ is a sequence of rules of $G$ that derives $T$ by left-most rewritings.
Since we consider only left-most derivations in this paper, by a \emph{derivation sequence} we mean a left-most derivation sequence.
Given a grammar $G$ and a derivation sequence of a string $T \in L(G)$, one can recover the string $T$.
A grammar is in Chomsky-normal form (CNF) if $R \subseteq V \times (\Sigma\cup V)^2$ and $S$ occurs in the body of no rules.
A \emph{straight-line grammar (SLG)} is a CFG which generates exactly one string, i.e., $L(G)$ is a singleton.
We have $|R_v|=1$ for an SLG unless it has useless rules.
An SLG in CNF is called a \emph{straight-line program (SLP)}.
A \emph{probabilistic grammar} $G_\pi$ is a pair of a CFG $G=(\Sigma,V,R,S)$ and a probability function $\pi:R \to [0,1]$ such that $\sum_{r \in R_v} \pi(r) = 1$ for each $v \in V$.
The probability on rules is generalized to derivation sequences as the product of rules constituting the sequence, i.e., $\pi(r_1 \dots r_m) = \pi(r_1) \dots \pi(r_m)$.
The probability $\pi(T)$ of a string $T$ is the sum of the probabilities of all the derivation sequences of $T$.\footnote{Precisely speaking, not every PCFG defines a probability distribution over $\Sigma^*$, but we ignore such technical details as it does not matter for our discussions in this paper.}

\begin{example}\label{ex:cfg}
Consider a grammar $G = (\Sigma,V,R,S)$ where $\Sigma = \{{\tt a},{\tt b},{\tt c}, {\tt d}, {\tt z}\}$, $V=\{v_1,\dots,v_9,S\}$ and $R$ consists of
\[
\begin{array}{llll}
	r_{1,0}: v_1 \to {\tt a},	&	r_{2,0}: v_2 \to {\tt b},	&	r_{3,0}: v_3 \to {\tt c},	&	r_{4,0}: v_4 \to {\tt d},
\\	r_{5,0}: v_5 \to {\tt z},	&	r_{6,0}: v_6 \to v_1v_2,	&	r_{6,1}: v_6 \to v_5 v_2,	&	r_{7,0}: v_7 \to v_3 v_4,
\\	r_{8,0}: v_8 \to v_6v_7,	&	r_{9,0}: v_9 \to v_8v_9,	&	r_{9,1}: v_9 \to z,	&	r_{S,0}: S \to v_9.
\end{array}
\]
The language of $G$ is $L(G) = (({\tt a}|{\tt z}){\tt bcd})^*{\tt z}$.
The left-most derivation sequence of ${\tt abcdzbcdz}$ is $\rho = r_{S,0} r_{9,0} r_{8,0} r_{6,0} r_{1,0} r_{2,0}
 r_{7,0} r_{3,0} r_{4,0} r_{9,0}
 r_{8,0} r_{6,1} r_{5,0}
 r_{8,0} r_{7,0} r_{3,0} 
 r_{4,0} r_{9,1}$, for
\begin{align}
    \label{derivation_process}
    S &\derive{r_{S,0}}{} v_9 \derive{r_{9,0}}{} v_8v_9 \derive{r_{8,0}}{} v_6v_7 v_9 \derive{r_{6,0}}{} v_1v_2v_7v_9 \derive{r_{1,0}}{} {\tt a}v_2v_7v_9 \derive{r_{2,0}}{} {\tt ab}v_7v_9 \nonumber \\
    &\derive{r_{7,0}}{} {\tt ab} v_3v_4v_9 \derive{r_{3,0}}{} {\tt abc}v_4v_9 \derive{r_{4,0}}{} {\tt abcd}v_9 \derive{r_{9,0}}{} {\tt abcd}v_8v_9 \nonumber \\
    &\derive{r_{8,0}}{} {\tt abcd} v_6v_7v_9 \derive{r_{6,1}}{} {\tt abcd}v_5v_2v_7v_9 \derive{r_{5,0}}{} {\tt abcdz}v_2v_7v_9 \\
    &\derive{r_{8,0}}{} {\tt abcdzb} v_7v_9 \derive{r_{7,0}}{} {\tt abcdzb}v_3v_4v_9 \derive{r_{3,0}}{} {\tt abcdzbc}v_4v_9 \nonumber \\
    &\derive{r_{4,0}}{} {\tt abcdzbcd} v_9 \derive{r_{9,1}}{} {\tt abcdzbcdz} \,.\nonumber
\end{align}
Define $\pi: R \to [0,1]$ by $\pi(r) = 1$ for all $r \in R$ but $\pi(r_{6,0})=\pi(r_{6,1})=0.5$, $\pi(r_{9,0})=0.7$, and  $\pi(r_{9,1})=0.3$.
Then $\pi(\rho) = 0.7 \cdot 0.5 \cdot 0.7 \cdot 0.5 \cdot 0.3 = 0.03675$.
\end{example}

\section{Our proposal framework of grammar compression using PCFGs}

\label{sec:gc}
Domain-specific grammar compression techniques assume that the texts to be compressed
are derived from a specific CFG, which is known to both an encoder and a decoder.
In this setting, we take advantage of the grammar knowledge and
encode and decode a derivation tree, or equivalently a derivation sequence, for the input text.
If we assume an appropriate probability assignment to the rules of the grammar, 
the derivation sequence may be highly compressed using the standard arithmetic coding technique~\cite{Cameron1988source}.
If the PCFG gives a probability $\pi(\rho)$ to a derivation sequence $\rho$, at most $\lrceil{- \log_2 \pi(\rho)}$ bits will suffice to encode $\rho$.

On the other hand, the universal grammar compression does not assume any specific grammar a priori.
The standard technique compresses an input text $T$
by finding a small SLG $G$, particularly an SLP, such that $L(G)=\{T\}$.
An SLG can be seen as a special case of a PCFG, where every rule has probability 1.
Accordingly, we use no bits to remember the unique derivation sequence, but we must remember the grammar itself.

This paper proposes a combination of those approaches.
We compress a text by both a PCFG and a derivation sequence for the text.
A \emph{PCFG compression} of a text $T$ is an encoded pair of a PCFG $G_\pi$ and a derivation sequence of $T$.
In the domain-specific approach, we encode a derivation sequence but no need to encode the grammar,
while it is the other way around in the standard universal compression.
Our approach might appear less efficient as we have to remember both.
However, we establish the following theorem with a concrete example. 
\begin{theorem}
There is an infinite family of texts which any SLPs require size $\Omega(\log n \log \log n)$ to derive,
while PCFGs with arithmetic coding of a derivation sequence have size $\Theta(\log n)$.
\end{theorem}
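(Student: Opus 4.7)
The plan is to take the unary family $\{{\tt a}^n\}$ already flagged in the introduction as the witness family, and to prove the two bounds by completely different arguments: a counting/information-theoretic lower bound for SLPs, and an explicit grammar construction with arithmetic coding for PCFGs.

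For the SLP lower bound I would proceed in two steps. First, any SLP for ${\tt a}^n$ must have $\Omega(\log n)$ rules, since in the paper's CNF every rule body has length exactly $2$, so the length of the string at each nonterminal can at most double as one ascends the dependency DAG. Second, an SLP with $k$ nonterminals must be encoded in $\Omega(k \log k)$ bits in any rule-enumeration scheme, because each body is an ordered pair drawn from an alphabet of size $\Theta(k)$, and a counting bound over all such SLPs rules out any uniformly shorter encoding. Combining the two with $k = \Theta(\log n)$ gives $\Omega(\log n \log \log n)$.

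For the PCFG upper bound I would exhibit an explicit family $G_n$ with just two nonterminals $S$ and $A$, rules $S \to {\tt a}A$, $S \to {\tt aa}$, $A \to {\tt a}A$, $A \to {\tt aa}$ (all in the paper's CNF), and a single real parameter $p$ with $\pi(S \to {\tt a}A) = \pi(A \to {\tt a}A) = p$. The unique left-most derivation of ${\tt a}^n$ (for $n \geq 3$) applies $S \to {\tt a}A$ once, then $A \to {\tt a}A$ exactly $n-3$ times, then $A \to {\tt aa}$ once, yielding probability $p^{n-2}(1-p)$. Setting $p = (n-2)/(n-1)$, the first-order optimum, gives probability at least $1/(e(n-1))$, so arithmetic coding of the derivation costs at most $\log_2 n + O(1)$ bits. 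Since the rule skeleton is a fixed four-rule template, encoding $G_n$ reduces to encoding the rational $p$ (equivalently, $n$), costing another $\lceil \log_2 n \rceil + O(1)$ bits. A matching $\Omega(\log n)$ lower bound is trivial by information content, since $\log_2 n$ bits are needed just to distinguish ${\tt a}^n$ from other members of the family.

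The main obstacle I anticipate is calibrating the probability $p$: a constant $p$ leaves the derivation cost linear in $n$, while pushing $p$ close to $0$ or $1$ penalizes one of the two rule classes, so only the balanced choice $p^\ast = (n-2)/(n-1)$ works. A secondary subtlety is the accounting: one must verify that communicating $p$ in the grammar encoding costs only $O(\log n)$ and does not reintroduce a $\log \log n$ factor, which holds because the probability parameter is a rational with denominator $n-1$, encodable in $\lceil \log_2 n \rceil + O(1)$ bits. Once these are settled the argument reduces to the routine estimate $-\log_2 (p^\ast)^{n-2}(1-p^\ast) = \log_2 n + O(1)$.
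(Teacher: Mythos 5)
Your proposal is correct and follows essentially the same route as the paper: the witness family is ${\tt a}^n$, the SLP bound comes from the $\Theta(\log n)$-rule doubling grammar costing $\Theta(\log\log n)$ bits per nonterminal occurrence, and the PCFG bound comes from a constant-size right-recursive grammar whose single derivation sequence is arithmetic-coded under empirically tuned probabilities ($p=(n-2)/(n-1)$ in your version versus $1-2^{-m}$ in the paper's), with the parameter itself costing $O(\log n)$ bits. If anything you are more careful than the paper, which only analyzes the canonical doubling SLP rather than arguing a lower bound over all SLPs, and whose two-rule grammar $S\to{\tt a}S$, $S\to{\tt a}$ violates its own Chomsky-normal-form definition, whereas your four-rule grammar respects it.
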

\begin{proof}
Let $T = {\tt a}^{n}$ with $n=2^m$ for some $m \ge 0$.
This is one of the strings that the standard grammar compression techniques compress most effectively.
The string is derived by the SLP consisting of rules of the form $v_{i+1} \to v_{i} v_i$ for $0 \le i < m$ and $v_0 \to {\tt a}$ with $v_m$ being the start symbol.
This SLP has only $m+1$ rules.
Since we require $\lrceil{\log (m+1)}$ bits to distinguish nonterminals, the description size will be $\Theta(m \log m) = \Theta(\log n \log\log n)$.

Consider the CFG $G$ with just two rules $r_0:S \to {\tt a} S$ and $r_1:S \to {\tt a}$,
which derives $T= {\tt a}^{n}$ with the derivation sequence $r_0^{n-1} r_1$.
We assign probabilities to those rules as $\pi(r_0) = 1-2^{-m}$ and $\pi(r_1) = 2^{-m}$ in accordance with their occurrence frequencies in the derivation sequence.
The size of $G$ is constant.
There can be different ideas to encode $\pi$, but anyway it would require at most $m$ bits. 
The derivation sequence $r_0^{n-1} r_1$ is represented by $- \log_2 \pi(r_0^{n-1} r_1) = - \log_2 (1-1/n)^{n-1}2^{-m} = m - \log_2 (1-1/n)^{n-1} = \Theta(m)$ bits using the standard arithmetic coding.
All in all, this compression with the PCFG requires $\Theta(\log n)$ bits. 
\end{proof}

In addition to the theoretical analysis on the benefit of our compression framework with PCFGs,
we will show experimental results in the remainder of this paper.

\section{Experiments}
\label{sec:experiments}
\subsection{Data used in our experiments}
Throughout our experiments, we use Fibonacci strings with noises.
The $m$-th Fibonacci string $\Fib_m$ $(m \geq 0)$ is defined by $\Fib_0={\tt b}$, $\Fib_1= {\tt a}$, and $\Fib_m = \Fib_{m-1}\cdot\Fib_{m-2}$ for $m \ge 2$.
For instance, $\Fib_{4} = {\tt abaab}$ and $\Fib_{5} = {\tt abaababa}$.
The \emph{Fibonacci SLP} for $\Fib_{m}$ has the following rules in accordance with the definition of the Fibonacci strings:
\[
	 r_{0,0}: v_0 \to \mathtt{b},\quad r_{1,0}: v_1 \to \mathtt{a},\quad r_{i,0}: v_{i} \to v_{i-1} v_{i-2} \text{ for } 2 \le i \le m
\]
where $v_m$ is the start symbol. Obviously the size of the SLP is linear in $m$ and each $v_i$ derives the $i$-th Fibonacci string.
We generate strings by adding one of the following types of noises at random positions in a Fibonacci string:
\begin{description}
    \item[Type~0] Replacing {\tt{a}} by {\tt{b}} or the other way around.
    \item[Type~{\textit{k}}] Replacing a letter by one of the $k$ new letters ${\tt{c}_1}, {\tt{c}_2}, \cdots, {\tt{c}_k}$, where $k \ge 1$.
\end{description}

\subsection{Methods to compare}
We compare \textsc{Re-Pair}, \textsc{gzip}, \textsc{bzip2} and our PCFG compression techniques over noisy Fibonacci strings.
An ideal (at least for small amount of noises) compression PCFG for noisy Fibonacci strings is obtained by modifying the Fibonacci SLP adding some rules corresponding to the noises.
Let $G_0$ and $G_k$ for $k \ge 1$ be grammars having rules of the Fibonacci SLP plus
\begin{itemize}
	\item[($G_0$)] $r_{0,1}: v_0 \to \mathtt{a}$ \ and \ $r_{1,1}: v_1 \to \mathtt{b}$,
	\item[($G_k$)] $r_{0,1}: v_0 \to \mathtt{c}_1$, \ $r_{0,2}: v_0 \to \mathtt{c}_2$, $\dots$, $r_{0,k}: v_0 \to \mathtt{c}_{k}$, and
	\\\  $r_{1,1}: v_1 \to \mathtt{c}_1$, \ $r_{1,2}: v_1 \to \mathtt{c}_2$, $\dots$, $r_{1,k}: v_1 \to \mathtt{c}_k$,
\end{itemize}
respectively.
Note that the number of those additional rules is independent of $m$ of $\Fib_m$.
Those grammars are conveniently defined by taking advantage of our \emph{a priori} knowledge about the generation of the target strings.
They are still useful for the goal of our experiments,
which is to demonstrate the potential of the PCFG compression framework rather than to propose a concrete encoder.
To encode the derivation sequence of a noisy Fibonacci string, it suffices to remember its subsequence consisting of $r_{i,j}$ for $i=0$ or $i=1$, i.e., the rules whose head is either $v_0$ or $v_1$.
Moreover, even replacing both $r_{0,j}$ and $r_{1,j}$ with the same symbol $j$ is allowed for unique decoding.
We use the output of \textsc{RangeCoder}~\cite{Martin1979RangeCoder} for such a string as our encoding of the derivation sequence.
Probability assignment to grammar rules is implicitly done in \textsc{RangeCoder}, where the rules $r_{0,j}$ and $r_{1,j}$ have the same probability for respective $j$.

We also design a prototype PCFG encoder modifying \textsc{Re-Pair}~\cite{Larsson1999RePair}, as shown in Algorithm~\ref{alg:proposal}.
\textsc{Re-Pair} first replaces each terminal symbol $a$ in the text by a unique nonterminal $v_a$ and adding a rule $v_a \to a$.
Hereafter by $V_\Sigma = \{\,v_a \mid a \in \Sigma\,\}$ we denote the set of nonterminals that derive a terminal symbol.
Then \textsc{Re-Pair} repeatedly replaces the most frequently occurring bigrams $xy$ in a text by a new nonterminal symbol $z$ and creates a new rule $z \to xy$, until no bigram occurs more than once.
Precisely speaking, what \textsc{Re-Pair} constructs does not follow the definition of an SLP given in Section~\ref{sec:preliminaries}.
Instead of having a start symbol, the constructed grammar has a \emph{start sequence} $S$ of nonterminals.
Algorithm~\ref{alg:proposal} also does the same initial replacement of terminals $a$ by nonterminals $v_a \in V_\Sigma$ and then repeatedly replaces the most frequent bigram by a new nonterminal $z$, where, in addition, the nonterminal $z$ also replaces another bigram that meets some criterion.
This is determined by the function $\mathit{FindMinBigram}(T,v,xy)$ explained below.
In this way, we construct two rules sharing the same nonterminal as their head, where we call the former one \emph{major} and the latter \emph{minor}.
The functions used in Algorithm~\ref{alg:proposal} are defined as follows.
\begin{itemize}
    \item $\mathit{FindMaxBigram}(T)$ returns the most frequent bigram in $T$ as long as it occurs more than once. If no bigram occurs more than once, it returns $\mathit{None}$.
    \item $\mathit{FindMaxContext}(T,xy)$ returns the symbol $c$ that occurs most frequently immediately before $xy$ in $T$.
    \item $\mathit{FindMinBigram}(T,c,xy)$ returns the bigram $x'y'$ that occurs least frequently (but at least once) immediately after $c$ in $T$ and meets the following condition: either $x'=x$ and $y' \in V_\Sigma$ or $y'=y$ and $x' \in V_\Sigma$.
     If there is no such bigram, it returns $\mathit{None}$.
    \item $\mathit{Replace}(T,x,v)$ replaces all occurrences of $x$ in $T$ with a nonterminal symbol $v$.\footnote{More precisely, this replacement shall be done one by one from left to right, e.g., $\mathit{Replace}(aaaaa,aa,v) = vva$.}
\end{itemize}
\begin{algorithm2e}
	\caption{Construct a CFG ${G}_T$ from input $T \in {\Sigma}^*$}
	\label{alg:proposal}
	\SetKwBlock{Loop}{Loop}{end}
		${\Sigma}_T,V_T,R_T := \emptyset$\;
		\For{each terminal symbol $a$ occurring in $T$}{
				${\Sigma}_T := \{a\} \cup {\Sigma}_T$\;
				$V_\Sigma := \{v_a\} \cup V_\Sigma$\;
				$R_T := \{v_a \rightarrow a\} \cup R_T$\;
				$\mathit{Replace}(T,a,v_a)$\; \label{alg:max_replace}
		}
        $V_T := V_\Sigma$\;
		\While{$\mathit{FindMaxBigram}(T) \neq \mathit{None}$}{ \label{alg:step_start}
            $b := \mathit{FindMaxBigram}(T)$\;
			$V_T := \{v_b\} \cup V_T$ for a fresh nonterminal $v_b$\;
			$R_T := \{v_b \rightarrow b\} \cup R_T$\; \label{alg:major_rule}
			$\mathit{Replace}(T,b,v_b)$\; \label{alg:min_replace}
			$c := \mathit{FindMaxContext}(b)$\; \label{alg:diff_start}
			$b' := \mathit{FindMinBigram}(T,c,b)$\;
			\If{$b' \neq \mathit{None}$}{
				$R_T := \{v_b \rightarrow b'\} \cup R_T$\; \label{alg:minor_rule}
				$\mathit{Replace}(T,b',v_b)$\;
			} \label{alg:diff_end}
		} \label{alg:step_end}
		$S_T := T$\;
		${G}_T := ({\Sigma}_T,V_T,R_T,S_T)$\;
		\textbf{return} ${G}_T$\;
\end{algorithm2e}
The difference between \textsc{Re-Pair} and our proposal method appears in Lines~\ref{alg:diff_start} to~\ref{alg:diff_end}.
The rules added on Line~\ref{alg:minor_rule} are called minor.
Rules added on Line~\ref{alg:major_rule} will be major if there is a minor rule sharing the same head.
The encoding method for derivation sequences of a noisy Fibonacci string under the obtained grammar is just the same as the one under $G_0$ and $G_k$.
We take the subsequence consisting of major and minor rules, where all major ones are represented by $0$ and the minor ones become $1$.

\subsection{Results}
\begin{figure}[tb]
    \centering
	\includegraphics[width=0.9\textwidth]{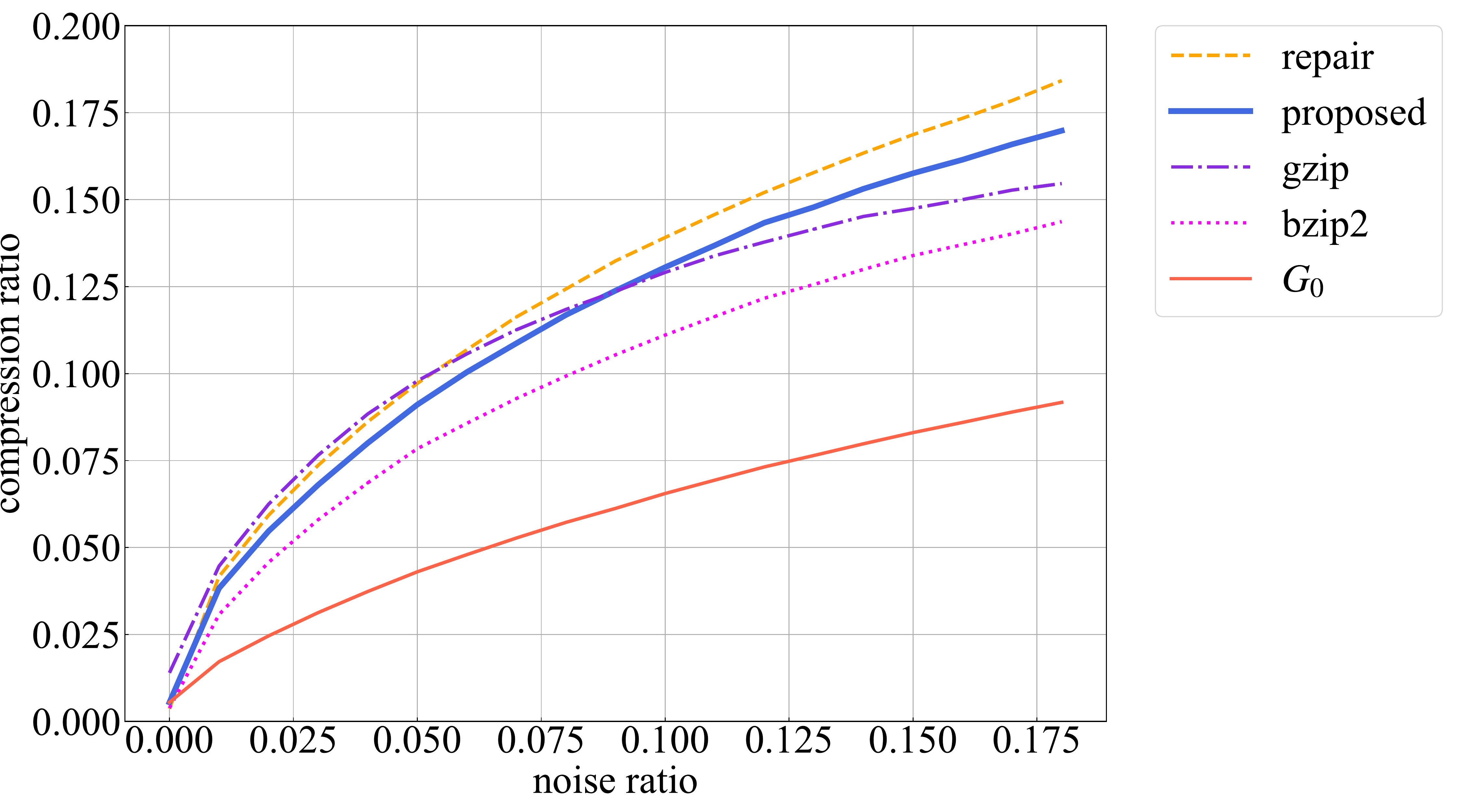}
    \vspace*{-0.5cm}\caption{Compression ratio comparison over Fibonacci strings with Type~0 noise\label{fig:noise1_result}}
\end{figure}

We performed three sets of experiments.
The first and second experiment sets used strings obtained from the 20th Fibonacci string $\Fib_{20}$ by adding noises of Types~0 and~1, respectively, with ratio varying from 0.0\% to 20.0\%, and the third used noises of Type~$k$ for $1 \le k \le 24$ with a fixed ratio $0.1\%$.
The string size is always 10946 bytes.
Figures~\ref{fig:noise1_result},~\ref{fig:noise2_result} and~\ref{fig:noise3_result} respectively compare the compression ratios of different methods against those three sets of noisy Fibonacci strings.
We measured the compression ratios achieved by different methods 10 times for each parameter and used the average.
The compression ratio is defined as the compressed data size over the original text size.
We use Maruyama's implementation for \textsc{Re-Pair}\footnote{https://code.google.com/archive/p/re-pair/},
 Seward's implementation (version~1.0.6) for \textsc{bzip2}, and
Apple gzip~272.250.1 for \textsc{gzip}.

\begin{figure}[tbhp]
    \centering
	\includegraphics[width=0.9\textwidth]{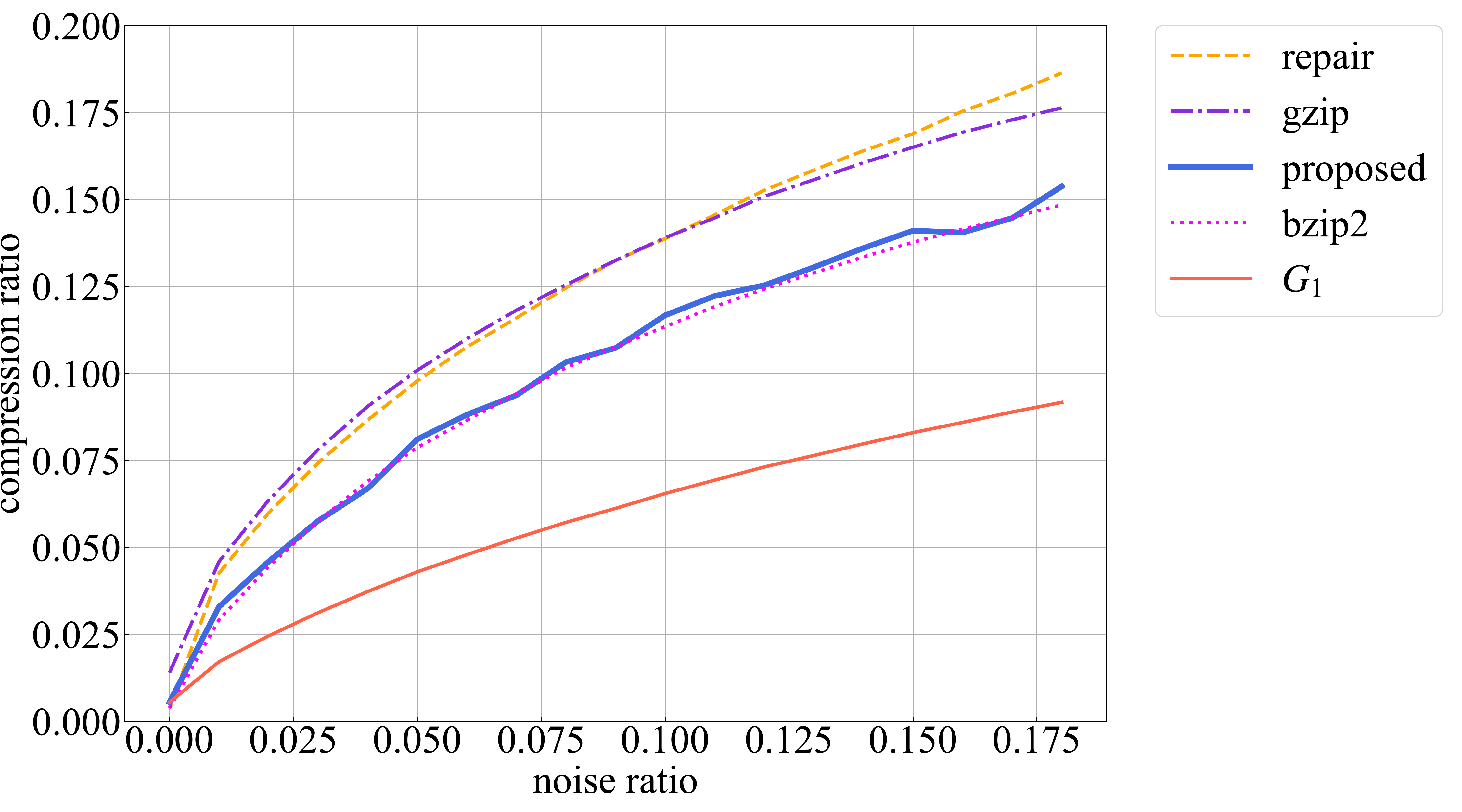}
    \vspace*{-0.5cm}\caption{Compression ratio comparison over Fibonacci strings with Type~1 noise\label{fig:noise2_result}}    
    \vspace*{2\floatsep}
    \centering
    \includegraphics[width=0.9\textwidth]{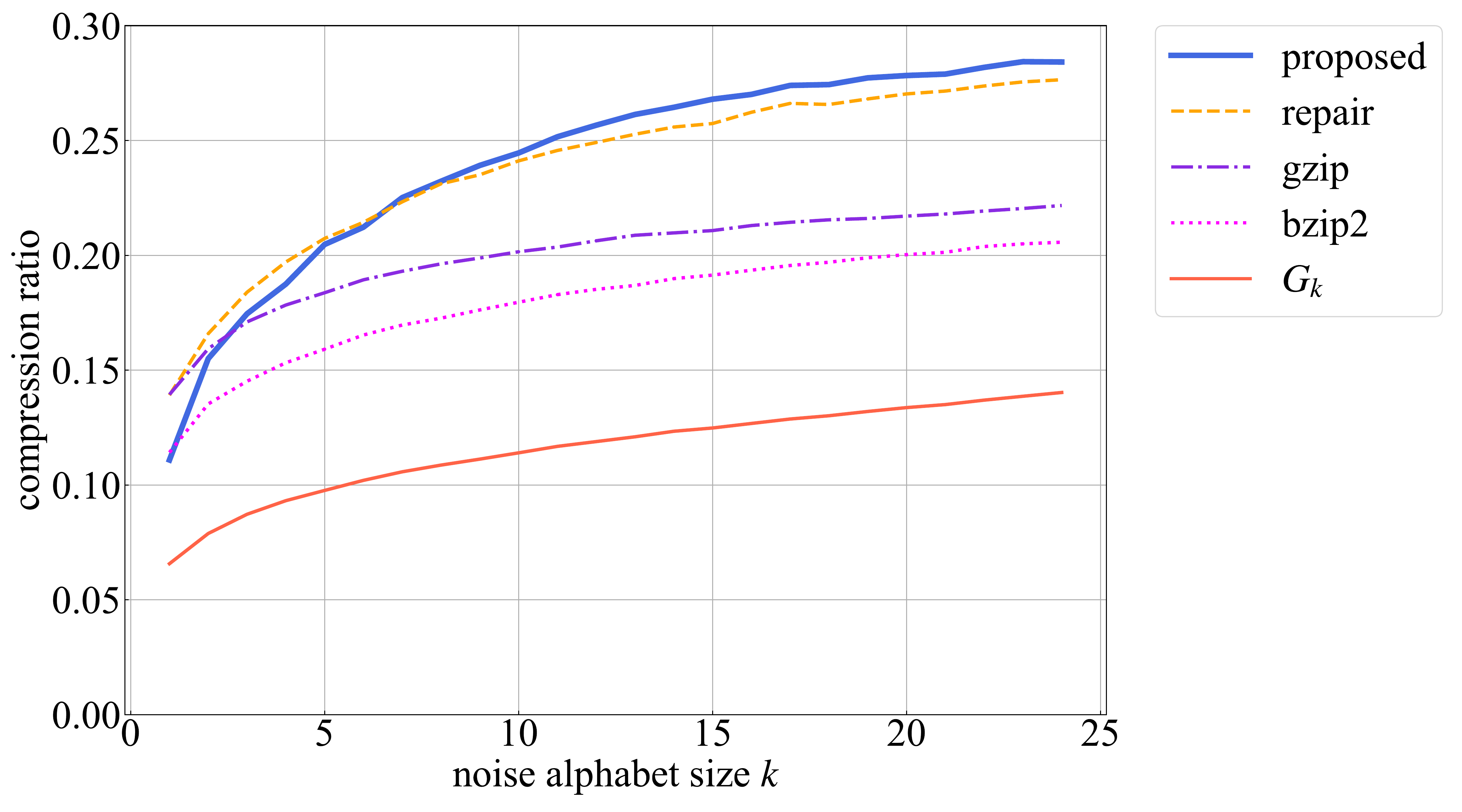}
    \vspace*{-0.5cm}\caption{Compression ratio comparison over Fibonacci strings with Type~$k$ noise\label{fig:noise3_result}}
\end{figure}

Figures~\ref{fig:noise1_result},~\ref{fig:noise2_result} and~\ref{fig:noise3_result} clearly show $G_0$, $G_1$ and $G_k$ achieve the best performance in the respective cases, respectively.  
Algorithm~\ref{alg:proposal} outperformed \textsc{Re-Pair} in the experiments on Type~0 and~$k$ noisy Fibonacci strings for small $k$, while those are almost tied for Type~$k$ noisy strings for bigger $k$.
Figure~\ref{fig:noise2_result} shows that, on Type~1 noisy Fibonacci strings, Algorithm~\ref{alg:proposal} outperformed \textsc{Re-Pair} and \textsc{gzip}, and performed as effectively as \textsc{bzip2}.
When processing Type~1 noisy Fibonacci strings,
the function $\mathit{FindMinBigram}$ tends to return a bigram including $v_{{\tt c}_1}$ at early stages of the while loop.
Then it likely happens that those bigrams including $v_{{\tt c}_1}$, say $v_{{\tt c}_1} v_{\tt b}$, are ``disguised'' as the noise-free bigram, say $v_{\tt a} v_{\tt b}$, of the corresponding major rule by sharing the same head. If indeed those occurrences of ${\tt c}_1$ were originally $\tt b$, this is the ideal behavior.
On the other hand, in Type~$k$ noisy Fibonacci strings for big $k$,
a bigram ${\tt a b}$ may be altered to ${\tt c}_i {\tt b}$ for any $i \le k$, but the nonterminal $v_{\tt ab}$ with the major rule $v_{\tt ab} \to v_{\tt a} v_{\tt b}$ can have only one minor rule, say $v_{\tt ab} \to v_{{\tt c}_1} v_{\tt b}$, where many other bigram $v_{{\tt c}_i} v_{\tt b}$ with $i > 1$ originating in the same bigram ${\tt ab}$ are not treated as its noisy variants.
When processing Type~0 noisy Fibonacci strings, we have no clear mark of noises.
The bigrams including altered letters may appear in the original noise-free Fibonacci string and thus they are not necessarily returned by $\mathit{FindMinBigram}$.
This would explain why Algorithm~\ref{alg:proposal} worked well only on Type~$k$ noisy Fibonacci strings with very small $k$.

\section{Conclusion}
We proposed a new approach to universal lossless text compression
using probabilistic context-free grammars.
We have given some theoretical evidence for the effectiveness of the proposed approach
and confirmed it also through preliminary experiments.
Our proposal framework enables us to represent a noisy text as a pair of the SLP for the ideal noise-free text plus noise information, which can be more compact than the SLP obtained by the standard grammar compression technique.
Various research directions are open for future.
For instance, compressing a collection of similar texts simultaneously with a single grammar and multiple derivations trees would be suitable for our compression scheme.
\emph{Compressed pattern matching}~\cite{Takeda2008cpm} in which texts (and patterns also for some cases) are given as compressed forms, is another interesting problem, because SLPs are often used 
for that purpose~\cite{Karpinski1995cpm,Lifshits2007cpm,Lohrey2012slp,Jez2015recompression}

\bibliographystyle{plain}
\bibliography{refs}

\end{document}